\documentclass[11pt,reqno]{amsart}

\usepackage[margin=1.15in]{geometry}
\usepackage{amsmath,amssymb,amsfonts,amsthm,mathtools}
\usepackage{bm}
\usepackage{enumitem}
\usepackage{hyperref}
\usepackage{microtype}
\usepackage{graphicx}
\usepackage{placeins}

\hypersetup{
  colorlinks=true,
  linkcolor=blue,
  citecolor=blue,
  urlcolor=blue,
  pdftitle={Conditional Euclidean--Hamiltonian Reductions for Sign-Problem Toy Models},
  pdfauthor={Tsogtgerel Gantumur},
  pdfsubject={Conditional Euclidean--Hamiltonian reduction for finite sign-problem models},
  pdfkeywords={sign problem, Euclidean Monte Carlo, effective Hamiltonian,
    projected transfer matrix, finite density}
}

\numberwithin{equation}{section}

\newtheorem{theorem}{Theorem}[section]
\newtheorem{proposition}[theorem]{Proposition}

\theoremstyle{definition}

\newcommand{\Hcal}{\mathcal H}
\newcommand{\Kcal}{\mathcal K}
\newcommand{\Fcal}{\mathcal F}
\newcommand{\Vcal}{\mathcal V}

\newcommand{\Mcal}{\mathcal M}

\newcommand{\Tr}{\operatorname{Tr}}

\newcommand{\e}{\mathrm e}
\newcommand{\ii}{\mathrm i}

\newcommand{\ip}[2]{\left\langle #1,#2\right\rangle}

\title[Conditional Euclidean--Hamiltonian reductions]{Conditional Euclidean--Hamiltonian reductions for sign-problem toy models}

\author{Tsogtgerel Gantumur}
\address{McGill University, Montr\'{e}al, Canada}
\address{National University of Mongolia, Ulaanbaatar, Mongolia}
\address{Mongolian Academy of Sciences, Institute of Mathematics and Digital Technology}
\email{gantumur.tsogtgerel@mcgill.ca}

\date{July 17, 2026}

\begin{document}

\begin{abstract}
Euclidean Monte Carlo methods are effective when the path-integral weight is
real and nonnegative, but finite-density fermion systems often produce
sign-changing or complex scalar weights after the fermionic sector is traced
out.  Hamiltonian formulations avoid this complex-weight sampling problem but
face rapid Hilbert-space growth.  This paper studies a conditional
Euclidean--Hamiltonian (CEH) reduction that combines these two descriptions.

The calculation is organized around a Monte Carlo-tractable reference problem
and a residual active sector.
Instead of tracing the active sector into a determinant
or scalar weight, CEH keeps it operator-valued and uses the reference
calculation to determine projected correlation or transfer matrices.  These
matrices define a finite effective Hamiltonian, with the remaining
finite-density dependence introduced after projection.  At finite rank, the
result is an effective model whose accuracy must be tested through basis
enlargement, metric conditioning, stochastic matrix-element errors, and
number-sector diagnostics.

The construction is examined in three finite benchmarks.  A two-channel
oscillator tests conditional basis compression; a positive-measure stochastic
calculation tests correlation-matrix harvesting and GEVP extraction, including
a comparison with a PDMS-style estimator; 
and a four-site Hubbard ring combines a finite-budget determinant-sign stress
test with finite-density continuation in non-target active spaces.  The
benchmarks support the proposed trade from scalar sign reweighting to a
monitored active-space approximation in structured finite models, but they do
not provide an end-to-end stochastic CEH treatment of the Hubbard sign problem
or establish favorable scaling.  Usefulness requires both low-rank
approximability and efficient extraction of the required projected matrix data.
\end{abstract}

\maketitle


\section{Introduction}
\label{sec:introduction}

Euclidean path integrals provide one of the most successful computational
representations of quantum many-body systems and quantum field theories.  When
the Euclidean weight is real and nonnegative, one may write schematically
\begin{equation}
        Z
        =
        \int \e^{-S_E[\phi]}\,D\phi,
        \qquad
        \e^{-S_E[\phi]}\geq0,
        \label{eq:intro-positive-euclidean}
\end{equation}
and interpret the integrand as a probability density for Monte Carlo sampling.
This positivity underlies the effectiveness of Euclidean lattice methods in
many bosonic theories and in fermionic theories for which the fermion
determinant is nonnegative.

At nonzero baryon chemical potential the same representation is generally no
longer positive.  In lattice QCD, for example, integrating out the fermions
gives
\begin{equation}
        Z(\mu)
        =
        \int DU\,
        \e^{-S_g[U]}
        \det D[U,\mu],
        \label{eq:intro-qcd-mu}
\end{equation}
and the determinant is generically complex for real baryon chemical potential.
The functional integral remains algebraically meaningful, but it no longer
defines a probability measure.  Reweighting with the determinant phase then
requires resolving cancellations whose severity commonly grows exponentially
with spacetime volume.  This is the finite-density sign problem; see
\cite{Nagata2022FiniteDensity,AlexandruBasarBedaqueWarrington2022,
AartsSexty2026}.

A Hamiltonian formulation avoids this particular complex-weight obstruction.
On a finite lattice, finite density enters through
\[
        H(\mu)=H-\mu N,
\]
which remains self-adjoint under the usual assumptions.  The difficulty is
instead the growth of the many-body Hilbert space.  Thus a fully Euclidean
description may lead to a sign or phase problem, whereas a fully Hamiltonian
description exposes the full state-space complexity.  This tension motivates
Hamiltonian lattice field theory and quantum simulation
\cite{BauerDavoudiEtAl2023}, as well as mixed classical formulations that use
Euclidean and Hamiltonian descriptions for different parts of the calculation.

The present paper studies one such mixed formulation.  The proposal is not an
algebraic solution of the sign problem.  It concerns the stage at which a
residual sector is converted into a scalar Euclidean weight.  Suppose that
\(X\) denotes the variables treated by a Monte Carlo-tractable reference
calculation, and let \(\Kcal\) be the Hilbert space of the remaining active
degrees of freedom.  For each configuration \(X\), the active sector contributes
an imaginary-time evolution operator
\[
        \mathcal U_{\rm active}[X]
        \in
        \operatorname{End}(\Kcal).
\]
The usual scalar route traces this sector out,
\begin{equation}
        \mathcal U_{\rm active}[X]
        \longmapsto
        \operatorname{Tr}_{\Kcal}\mathcal U_{\rm active}[X],
        \label{eq:intro-scalarization}
\end{equation}
or replaces the corresponding quadratic fermion trace by a determinant.  When
the resulting scalar is nonnegative, this is an exceptionally effective
reduction.  When it is sign-changing or complex, however, the same
scalarization produces the probability-sampling bottleneck.

Conditional Euclidean--Hamiltonian reduction, abbreviated CEH, stops before
this final trace.  The reference sector is treated by a Monte Carlo-tractable
Euclidean calculation, while the residual sector remains operator-valued and
is compressed to an \(R\)-dimensional trial space by
\(P_R\,\mathcal U_{\rm active}[X]P_R\).  In practice, one need not construct this
projected operator separately for every sampled configuration.  The primary
Euclidean outputs are finite correlation or transfer matrices between selected
trial states or interpolating operators.

A product Hilbert-space decomposition,
\[
        \Hcal=L^2(X)\otimes\Kcal,
        \qquad
        H=H_x+A(\hat x),
\]
provides the most transparent realization of this idea.  Resolving only the
\(x\)-coordinate in the Euclidean kernel leaves an operator on \(\Kcal\).
Tracing over \(\Kcal\) produces a scalar path weight, whereas retaining selected
matrix elements produces active-space data.  Section~\ref{sec:operator-valued-kernels}
reviews this standard operator-valued kernel construction and specializes it
to isolate the scalarization step used in CEH.

The computational principle is more general than a literal decomposition into
two tensor factors.  One may instead use an operator split
\begin{equation}
        H=G+A,
        \label{eq:intro-operator-split}
\end{equation}
where \(G\) defines a reference problem that can be treated efficiently by a
Euclidean Monte Carlo or related sampling method, while \(A\) is retained as an
operator-valued residual on the projected space.  The reference problem need
not be strictly sign-free; it is sufficient that any remaining sign or phase
fluctuations can be controlled at acceptable cost.  The split may therefore
separate interaction terms, charged sectors, clusters, or other selected
contributions rather than distinct physical subsystems.  In an
interaction-picture representation, the evolution generated by \(G\) supplies
the reference ensemble, while ordered insertions of \(A\) enter the
matrix-valued observables.

At finite density, one may distribute the chemical-potential term between the
two parts,
\[
        H(\mu)=G(\mu)+A(\mu),
\]
provided the reference problem remains tractable.  A particularly simple
choice is to leave the residual term \(-\mu N\) entirely for the projected
model.  More generally, part of the chemical-potential dependence may already
be included in \(G(\mu)\), with only the remaining contribution applied after
projection.  In either case, the difficult task is to construct and harvest a
projected space containing the number sectors relevant to the target density.

Choose trial states \(\Phi_I\), or interpolating operators that create them,
and let \(T=\e^{-aH}\).  The Euclidean input is the correlation matrix
\begin{equation}
        C_{IJ}(n)
        =
        \langle\Phi_I,T^n\Phi_J\rangle,
        \qquad
        I,J=1,\ldots,R.
        \label{eq:intro-correlation-matrix}
\end{equation}
These matrices determine the transfer dynamics resolved by the trial space,
either through a projected transfer operator or through a correlation-matrix
generalized eigenvalue problem.  In the simple case in which the full
chemical-potential term is deferred, the reduced finite-density Hamiltonian is
\[
        H_{\rm eff}(\mu)
        =
        H_{\rm eff}(0)-\mu N_{\rm eff}.
\]
If part of the chemical-potential dependence is already included in the
reference problem, only the remaining projected contribution is added at this
stage.  Thus the Euclidean GEVP extracts the projected dynamics, while the
deferred finite-density contribution is applied in the resulting reduced
model.
The detailed
constructions are given in \S\ref{sec:operator-valued-kernels} and \S\ref{sec:correlation-reduction}.

At finite rank, this prescription defines an effective model.  Direct Galerkin
projection commutes exactly with adding \(-\mu N\) when \(H\) and \(N\) are
projected on the same trial space.  A Hamiltonian reconstructed from a
compressed transfer operator is more generally an effective generator of the
resolved Euclidean dynamics.  Its accuracy must therefore be tested through
rank enlargement, Euclidean-time stability, metric conditioning,
number-sector coverage, and possible number-sector leakage.

CEH therefore changes the computational difficulty rather than eliminating it.
For the trade to be useful, two conditions must hold.  First, the residual
active dynamics must be low-rank approximable for the observables of interest:
a moderate-dimensional trial space must give stable results under enlargement.
Second, the required matrix data must be reference-harvestable: the chosen reference calculation must resolve the relevant active or charged directions
with usable overlap, variance, operator-basis size, and metric conditioning.
These requirements are distinct.  A small active sector may still be
effectively invisible to the chosen reference ensemble, while accurately
measured correlators are not useful if the required rank approaches the full
Hilbert-space dimension.

The relevant failure modes are consequently explicit.  The active rank may
grow too rapidly; small metric directions may be lost to noise; the operator
dictionary may fail to overlap with the relevant charged states; or the
finite-density tilt may select number sectors absent from the projected space.
In such regimes CEH degenerates either toward full Hamiltonian simulation or
toward an ill-conditioned signal-extraction problem.  No general claim about
volume scaling or asymptotic complexity is made in this paper.

The paper formulates the pre-scalarization split in operator-valued-kernel and
correlation-matrix language, relates projected transfer matrices to conditional
Galerkin reduction, and tests the resulting diagnostics in three finite
benchmarks: conditional compression and stochastic GEVP harvesting in a
two-channel oscillator, followed by a determinant-sign and non-target
sector-coverage study in a four-site Hubbard ring.

Most of the ingredients used here are standard.  Operator-valued Euclidean
kernels and semi-Euclidean constructions have mathematical precedents
\cite{HershPapanicolaou1972,BrydgesFederbush1974}.  Correlation-matrix GEVPs
are established tools in lattice spectroscopy
\cite{LuscherWolff1990,BlossierEtAl2009}, and the Monte Carlo Hamiltonian
program already constructs effective Hamiltonians from Euclidean transition
amplitudes \cite{JirariKrogerLuoMoriarty1999}.  CEH does not claim novelty for
operator-valued path integrals, Rayleigh--Ritz projection, GEVPs, or the
reconstruction of finite Hamiltonians from Euclidean data.

The closest recent projected-matrix relative is projected density matrix
sampling (PDMS) \cite{KarnaWuChandrasekharanKaul2025}.  PDMS estimates
projected thermal density and energy matrices of the full Hamiltonian; if the
chosen path-integral representation is signful, the signs enter the measured
observables.  
CEH shares the final projected-matrix linear algebra but changes
the upstream construction: a tractable reference formulation is chosen before
the residual sector is scalarized, and conditioned active-sector matrix data
are measured instead.
Other recent hybrid methods combining Monte Carlo with Hamiltonian solvers,
together with other matrix-valued sampling approaches, further show that
neither ingredient is by itself new
\cite{TemmenGistiLuitzLuuOstmeyer2026,
BollmarkMardazadHofmannKantian2024,WangWangMaoYan2026GRDM}.
The specific point examined here is the placement of the split before
scalarization, together with finite-density continuation on the resulting
projected model.

The scope is therefore deliberately limited.  This paper asks whether the CEH
ordering can replace scalar sign reweighting by a monitored active-space
approximation in controlled finite examples.  It does not claim a generic
solution of the sign problem, a new projected-Hamiltonian formalism, or a
theorem guaranteeing active-sector compressibility or reference harvestability.

Section~\ref{sec:operator-valued-kernels} reviews the operator-valued kernel
and its scalar-weight counterpart and isolates the scalarization step relevant
to CEH.
Section~\ref{sec:correlation-reduction} gives the correlation-matrix,
projected-transfer-matrix, conditional Galerkin, and finite-density
constructions.
Section~\ref{sec:numerical-results} presents the oscillator and Hubbard
experiments.
Section~\ref{sec:conclusion} summarizes the conclusions and remaining
limitations.

\section{Operator-valued kernels and scalar weights}
\label{sec:operator-valued-kernels}

This section isolates the step at which a residual quantum sector becomes a
scalar Euclidean weight.  A product Hilbert-space model gives the clearest
formulation: resolving only the reference coordinate produces an
operator-valued kernel, while tracing the remaining sector produces a scalar
trace or determinant.  When this scalar is nonnegative, the trace is precisely
what makes Euclidean Monte Carlo effective.  When it is sign-changing or
complex, the same scalarization creates the sign or phase problem.  CEH instead
uses the reference calculation to obtain finite matrix data before that trace
is taken.

\subsection{Sector and operator splits}
\label{subsec:model-class}

Consider first the product Hilbert space
\begin{equation}
        \Hcal=L^2(X,dx)\otimes\Kcal ,
        \label{eq:product-H}
\end{equation}
where \(X\) is the configuration space of the reference sector and
\(\Kcal\) is the active Hilbert space.  Let
\begin{equation}
        H=H_x+A(\hat x),
        \label{eq:H-basic}
\end{equation}
where \(H_x\) acts on \(L^2(X)\) and \(A(x)\) acts on \(\Kcal\).
The reference Hamiltonian \(H_x\) is assumed to admit an efficiently
tractable Euclidean representation.  The cleanest case is a nonnegative
sampling measure, although a controlled residual sign or phase problem is
also compatible with the general construction.  A typical reference
Hamiltonian is
\begin{equation}
        H_x
        =
        -\frac12\nabla_x\cdot M^{-1}\nabla_x+V(x).
        \label{eq:Hx-basic}
\end{equation}
We assume that \(H_x\) is self-adjoint and bounded below and that
\(H_x+A(\hat x)\) defines a self-adjoint, bounded-below form sum.  This includes
the oscillator examples below: their linear couplings are relatively
form-bounded with respect to the confining quadratic Hamiltonian.  The
finite-slice expressions should be read literally when \(A\) is bounded and as
the corresponding Trotter approximations in the unbounded examples.

A state in \(\Hcal\) can be viewed as a \(\Kcal\)-valued function
\(\Psi(x)\), with
\begin{equation}
        \ip{\Psi}{\Phi}_{\Hcal}
        =
        \int_X
        \ip{\Psi(x)}{\Phi(x)}_{\Kcal}\,dx .
        \label{eq:K-valued-inner-product}
\end{equation}
The variable \(x\) is treated through the reference Euclidean representation,
while the value \(\Psi(x)\in\Kcal\) retains the active degrees of freedom.

The product decomposition is a concrete realization rather than a restriction
on the general architecture.  More generally, one may begin with an operator
split
\begin{equation}
        H=G+A,
        \label{eq:S2-operator-split}
\end{equation}
where \(G\) admits a positive or otherwise controlled Euclidean representation
and \(A\) is retained as an operator-valued residual.  Formally,
\begin{equation}
        \e^{-\beta(G+A)}
        =
        \e^{-\beta G}
        \mathcal T_\tau
        \exp\left(
        -\int_0^\beta A_I(\tau)\,d\tau
        \right),
        \qquad
        A_I(\tau)=\e^{\tau G}A\e^{-\tau G}.
        \label{eq:S2-interaction-picture}
\end{equation}
The reference evolution generated by \(G\) supplies the Euclidean data, while
the ordered residual insertions are kept inside finite matrix observables.
Such a split may separate terms, number sectors, clusters, or other selected
degrees of freedom rather than literal physical subsystems.  

When \(H(0)\) supplies a tractable reference problem, assigning the full
chemical-potential term to the residual part gives the operator split
\(G=H(0)\), \(A=-\mu N\), so that \(H(\mu)=G+A\).
The product model \eqref{eq:H-basic} will be used below
because it displays the scalarization step explicitly; the
correlation-matrix construction of Section~\ref{sec:correlation-reduction}
applies equally to the broader operator-split interpretation.

\subsection{The operator-valued Euclidean kernel}
\label{subsec:operator-valued-kernel}

For \(\beta>0\), resolve only the \(x\)-coordinate in the Euclidean kernel:
\begin{equation}
        K_\beta(x_f,x_i)
        =
        \langle x_f|\e^{-\beta H}|x_i\rangle_x .
        \label{eq:operator-kernel-def}
\end{equation}
This is an operator on \(\Kcal\), characterized by
\begin{equation}
        \bigl(\e^{-\beta H}\Phi\bigr)(x_f)
        =
        \int_X K_\beta(x_f,x_i)\Phi(x_i)\,dx_i .
        \label{eq:kernel-action}
\end{equation}
Equivalently,
\begin{equation}
        \ip{\eta_f}{K_\beta(x_f,x_i)\eta_i}_{\Kcal}
        =
        \langle x_f,\eta_f|
        \e^{-\beta H}
        |x_i,\eta_i\rangle_{\Hcal},
        \qquad
        \eta_i,\eta_f\in\Kcal .
        \label{eq:kernel-matrix-elements}
\end{equation}
Thus \(K_\beta(x_f,x_i)\) is simply the heat kernel of \(H\) with the reference coordinate resolved and the active sector left untraced.

Set \(\beta=Na\) and write
\[
        k_a(x',x)
        =
        \langle x'|\e^{-aH_x}|x\rangle .
\]
Inserting resolutions of identity only in the \(x\)-space gives the
finite-slice operator-valued kernel
\begin{equation}
\begin{aligned}
        K_{\beta,a}(x_N,x_0)
        =
        \int_{X^{N-1}}
        \prod_{n=1}^{N-1}dx_n\,
        \prod_{n=0}^{N-1}k_a(x_{n+1},x_n)\,
        \e^{-aA(x_{N-1})}\cdots\e^{-aA(x_0)},
        \label{eq:operator-valued-discrete}
\end{aligned}
\end{equation}
where \(x_0=x_i\) and \(x_N=x_f\).  A symmetric splitting may be used in place
of the displayed first-order form.  The ordering of the active factors is
essential because the matrices \(A(x_n)\) need not commute.

The corresponding continuum notation is
\begin{equation}
        K_\beta(x_f,x_i)
        =
        \int_{x(0)=x_i}^{x(\beta)=x_f}
        \e^{-S_x[x]}\,
        \mathcal T_\tau
        \exp\left(
        -\int_0^\beta A(x(\tau))\,d\tau
        \right)
        Dx .
        \label{eq:operator-valued-continuum}
\end{equation}
Equation~\eqref{eq:operator-valued-discrete}, or its symmetric-splitting
analogue, is the precise finite-slice object used whenever discretization
details matter.  
The continuum expression records its structure: each reference-sector history carries an ordered operator on \(\Kcal\).

\subsection{Scalarization}
\label{subsec:scalar-weight-formulation}

For periodic Euclidean time, the finite-slice partition function can be written
as
\begin{equation}
\begin{aligned}
        Z_a(\beta)
        &=
        \int_{X^N}
        \prod_{n=0}^{N-1}dx_n\,
        \prod_{n=0}^{N-1}k_a(x_{n+1},x_n)\,
        W_a[x_0,\ldots,x_{N-1}],
        \qquad x_N=x_0,
        \\
        W_a[x_0,\ldots,x_{N-1}]
        &=
        \Tr_{\Kcal}
        \left[
        \e^{-aA(x_{N-1})}\cdots\e^{-aA(x_0)}
        \right].
        \label{eq:scalar-weight}
\end{aligned}
\end{equation}
The map from the ordered active-sector operator to \(W_a\) is the
scalarization step.

When \(W_a\) is real and nonnegative, the integrand defines an ordinary
Euclidean probability measure.  This is the successful regime of determinant,
auxiliary-field, and related Monte Carlo methods.  Scalarization has then
compressed a potentially large quantum trace into a compact positive weight.

If instead
\(W_a=|W_a|\e^{\ii\theta_a}\),
one may sample the measure proportional to \(|W_a|\) and restore the phase by
reweighting.  The resulting denominator 
\(\langle\e^{\ii\theta_a}\rangle_{\rm pq}\)
may become exponentially small in spacetime volume or inverse temperature.
The difficulty is not an inaccurate pointwise evaluation of \(W_a\); it is
that a sign-changing or complex weight does not define the nonnegative
probability measure required for ordinary importance sampling, so the physical
result must be recovered from cancellations.

For a quadratic number-conserving fermionic sector, the same scalarization is
usually expressed as a determinant.  If \(U_f[x]\) is the one-particle
imaginary-time evolution along the history \(x\), and \(\Gamma(U_f[x])\) is
its second quantization on the fermionic Fock space, then
\begin{equation}
        \Tr_{\Fcal}\Gamma(U_f[x])
        =
        \det\bigl(I+U_f[x]\bigr).
        \label{eq:fermion-trace-determinant}
\end{equation}
This identity is one of the principal computational advantages of fermionic
Monte Carlo.  It becomes a sign-problem bottleneck only when the resulting
determinant remains sign-changing or complex.

\subsection{Leaving the active sector untraced}
\label{subsec:untraced-active-sector}

In the positive-measure realization used below, CEH follows the other route
through \eqref{eq:operator-valued-discrete}.  Choose a finite orthonormal family
\[
        \{|\alpha\rangle\}_{\alpha=1}^R
        \subset\Kcal
\]
and let \(P_R\) denote the orthogonal projection onto its span.  For a fixed
reference-sector history \(x\), define the compressed active-sector evolution
\begin{equation}
        M_a^{(R)}[x]
        =
        \left.
        P_R
        \e^{-aA(x_{N-1})}\cdots\e^{-aA(x_0)}
        P_R
        \right|_{\operatorname{Ran}P_R}
        \in\operatorname{End}(\mathbb C^R).
        \label{eq:matrix-valued-estimator}
\end{equation}
Its entries are
\[
        \bigl(M_a^{(R)}[x]\bigr)_{\alpha\beta}
        =
        \langle\alpha|
        \e^{-aA(x_{N-1})}\cdots\e^{-aA(x_0)}
        |\beta\rangle .
\]

Let \(G_a[x]\geq0\) denote the reference-sector path weight and let
\(\mathbb E_{\rm ref}\) denote expectation with respect to the corresponding
normalized path measure.
After fixing or integrating the endpoint data and absorbing the corresponding
bridge normalization into \(\mathcal N_{\rm ref}\), the projected partial
kernel is obtained schematically as
\begin{equation}
        K^{(R)}_{\alpha\beta}(\beta)
        =
        \mathcal N_{\rm ref}(\beta)\,
        \mathbb E_{\rm ref}
        \left[
        \bigl(M_a^{(R)}[x]\bigr)_{\alpha\beta}
        \right] .
        \label{eq:direct-matrix-MC}
\end{equation}
In the continuum notation,
\(M_a^{(R)}[x]\) is replaced by the compression of the ordered exponential
\[
        \mathcal T_\tau
        \exp\left[
        -\int_0^\beta A(x(\tau))\,d\tau
        \right].
\]
Thus each sampled reference history contributes a finite matrix, and averaging
these matrices produces the projected operator-valued kernel.

In a fixed active basis, the same untraced description is simply a
multicomponent Hamiltonian problem.  Writing
\[
        \Psi(x)
        =
        \sum_\alpha\psi_\alpha(x)|\alpha\rangle
\]
gives
\begin{equation}
        H_x\psi_\alpha(x)
        +
        \sum_\beta
        A_{\alpha\beta}(x)\psi_\beta(x)
        =
        E\psi_\alpha(x).
        \label{eq:coupled-channel-basic}
\end{equation}
The conditional Galerkin construction of Section~\ref{sec:correlation-reduction}
is a finite-dimensional Rayleigh--Ritz realization of this coupled problem.

Equation~\eqref{eq:direct-matrix-MC} is mainly conceptual.  Direct entrywise
sampling may become expensive or noisy as the active dimension grows.  In
lattice and many-body applications, the natural Euclidean outputs are instead
finite correlation matrices between trial states or interpolating operators.
Those matrices, rather than the full configuration-space kernel, are the
starting point of Section~\ref{sec:correlation-reduction}.

\subsection{What is gained and what is not}
\label{subsec:what-is-gained}

The distinction is now explicit.  For a path \(x\), let
\[
        \mathcal U_a[x]
        =
        \e^{-aA(x_{N-1})}\cdots\e^{-aA(x_0)}
\]
denote the full conditioned active evolution.  The scalar route retains only
\(\Tr_{\Kcal}\mathcal U_a[x]\), whereas CEH retains the compressed matrix
\(M_a^{(R)}[x]\) defined in \eqref{eq:matrix-valued-estimator} and averages its
entries.  Thus the scalar active-sector trace is replaced by matrix-valued
harvesting.

This replaces scalar sign or phase reweighting by a finite-rank matrix
problem.  It is useful only when the residual dynamics are low-rank
approximable and the required matrix data are reference-harvestable.  The split
must therefore balance reference tractability against rank, overlap, variance,
and conditioning; otherwise CEH degenerates toward full Hamiltonian simulation
or ill-conditioned signal extraction.  Section~\ref{sec:correlation-reduction}
turns these failure modes into explicit diagnostics.

\section{Correlation-matrix Euclidean--Hamiltonian reduction}
\label{sec:correlation-reduction}

The operator-valued kernel of
Section~\ref{sec:operator-valued-kernels} identifies the scalarization step.
For computation, however, one usually does not tabulate kernels between
configurations or store wavefunctions over the reference-sector variables.  The
natural Euclidean output is a finite family of correlation matrices between
chosen trial states or interpolating operators.  These matrices define the
resolved transfer dynamics and, after projection, a finite Hamiltonian model.

Let
\(T=\e^{-aH}\)
be the Euclidean one-step transfer operator, and choose trial states
\[
        \Phi_I\in\Hcal,
        \qquad
        I=1,\ldots,R.
\]
The trial states are chosen to carry the active-sector quantum numbers and
number sectors relevant to the target observables.  In a finite model they may
be explicit vectors.  In a lattice-field-theory setting they are more naturally
created by interpolating operators acting on a reference state or ensemble.

\subsection{Correlation matrices as the primary Euclidean input}
\label{subsec:correlation-primary}

Define
\begin{equation}
        C_{IJ}(n)
        =
        \ip{\Phi_I}{T^n\Phi_J}_{\Hcal},
        \qquad
        n=0,1,2,\ldots .
        \label{eq:correlation-matrix}
\end{equation}
At \(n=0\), \(C(0)\) is the Gram matrix of the trial states.  For \(n>0\),
\(C(n)\) records Euclidean propagation between the same states.  The central
computational object is therefore the matrix family \(C(n)\), rather than a
scalar determinant or trace.

In the positive-reference specialization, each entry can be viewed
schematically as a matrix-valued observable averaged over reference-sector
histories.  If \(x\) denotes a sampled history and
\(U_{\Kcal}^{(n)}[x]\) is the conditioned active-sector evolution over time
\(na\), then
\begin{equation}
        C_{IJ}(n)
        =
        \mathbb E_{\rm ref}
        \left[
        \left\langle
        v_I^{\rm out}[x],
        U_{\Kcal}^{(n)}[x]v_J^{\rm in}[x]
        \right\rangle_{\Kcal}
        \right].
        \label{eq:correlation-reference-expectation}
\end{equation}
The source and sink vectors depend on the chosen operators and boundary
conditions.  The structural point is independent of these details: the
reference variables define the positive sampling measure, while the residual
sector appears inside finite-dimensional matrix observables.  One does not use
\(\Tr_{\Kcal}U_{\Kcal}^{(n)}[x]\) as part of the probability weight.

The same formulation applies to the operator split \(H=G+A\), cf. \eqref{eq:intro-operator-split} and \eqref{eq:S2-operator-split}.  The reference evolution generated
by \(G\) supplies the Euclidean ensemble, and ordered residual insertions of
\(A\) enter the trial-space correlation matrices.  A literal tensor-product
decomposition is therefore not required once the finite trial states and their
correlators have been specified.

\subsection{Projected transfer matrices and the GEVP}
\label{subsec:projected-transfer-gevp}

Assume first that \(C(0)\) is positive definite.  Orthogonalizing the trial
states gives the projected transfer matrix
\begin{equation}
        T_{\rm eff}
        =
        C(0)^{-1/2}C(1)C(0)^{-1/2}.
        \label{eq:Teff-C0C1}
\end{equation}
More generally, one may use a nonzero reference time \(n_0\):
\begin{equation}
        T_{\rm eff}(n_0)
        =
        C(n_0)^{-1/2}
        C(n_0+1)
        C(n_0)^{-1/2}.
        \label{eq:Teff-n0}
\end{equation}
This is the transfer operator compressed to the Euclidean-filtered trial
space generated by \(T^{n_0/2}\Phi_I\).  Increasing \(n_0\) suppresses
high-energy contamination, but it can also remove weakly resolved directions
and worsen the conditioning of \(C(n_0)\).

The projected energies are usually extracted through the generalized
eigenvalue problem
\begin{equation}
        C(n)v_k
        =
        \lambda_k(n,n_0)\,C(n_0)v_k,
        \label{eq:GEVP-main}
\end{equation}
with
\begin{equation}
        E_k^{\rm eff}(n,n_0)
        =
        -\frac{1}{a(n-n_0)}
        \log\lambda_k(n,n_0).
        \label{eq:effective-energy}
\end{equation}
If the trial space is invariant under the transfer operator, these quantities
recover the corresponding exact energies and are independent of \(n\) and
\(n_0\).  In a finite noninvariant space, one instead looks for stability under
changes of the trial basis and Euclidean-time window.

When \(T_{\rm eff}\) is positive definite and its eigenvalues are separated
from zero,
one may define
\begin{equation}
        H_{\rm eff}
        =
        -a^{-1}\log T_{\rm eff}.
        \label{eq:Heff-log-main}
\end{equation}
In stochastic calculations, solving the GEVP is often more stable than forming
the logarithm of a noisy matrix.  We nevertheless use \(H_{\rm eff}\) as a
convenient name for the finite dynamics represented by the projected transfer
matrix or its GEVP spectrum.

\begin{proposition}[Projected transfer matrix]
\label{prop:projected-transfer}
Let \(H\) be self-adjoint and bounded below, let \(T=\e^{-aH}\), and let
\(
        \Vcal_R
        =
        \operatorname{span}\{\Phi_1,\ldots,\Phi_R\}.
\)
Assume that \(C(0)>0\), and define \(J_R:\mathbb C^R\to\Hcal\) by
\[
        J_Rc=\sum_{I=1}^R c_I\Phi_I.
\]
Then we have
\[
C(0)=J_R^*J_R,\qquad C(1)=J_R^*TJ_R,
\]
and
\(U_R=J_RC(0)^{-1/2}\)
is an isometry from \(\mathbb C^R\) onto \(\Vcal_R\).  Consequently,
\begin{equation}
        T_{\rm eff}
        =
        C(0)^{-1/2}C(1)C(0)^{-1/2}
        =
        U_R^*TU_R
        \label{eq:Teff-compression}
\end{equation}
is the compression of the transfer operator to \(\Vcal_R\), written in an
orthonormal basis.  Its eigenvalues are Ritz values of \(T\) on \(\Vcal_R\).  
Under the usual Rayleigh--Ritz approximation assumptions,
enlargement of the trial spaces gives convergence of the corresponding
isolated low-energy levels.
\end{proposition}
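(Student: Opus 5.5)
The argument is linear algebra on the finite-rank map \(J_R\), followed by an appeal to standard Rayleigh--Ritz theory for the convergence clause.

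\textbf{Step 1: adjoint and correlation identities.} The adjoint \(J_R^*:\Hcal\to\mathbb C^R\) is \((J_R^*\psi)_I=\ip{\Phi_I}{\psi}\), using the convention that the inner product is conjugate-linear in the first slot. Hence
\[
(J_R^*J_Rc)_I=\sum_J\ip{\Phi_I}{\Phi_J}c_J=(C(0)c)_I ,
\]
and in the same way \(J_R^*TJ_R=C(1)\). Note that \(T\) is bounded, positive and self-adjoint by the spectral theorem, since \(H\) is bounded below.

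\textbf{Step 2: \(U_R\) is an isometry onto \(\Vcal_R\).} Since \(C(0)>0\), \(C(0)^{-1/2}\) exists and is self-adjoint. Then
\[
U_R^*U_R=C(0)^{-1/2}J_R^*J_RC(0)^{-1/2}=I,
\]
so \(U_R\) is an isometry. Its range equals \(\operatorname{Ran}J_R=\Vcal_R\), because \(C(0)^{-1/2}\) is invertible. Consequently \(U_RU_R^*\) is the orthogonal projection \(P\) onto \(\Vcal_R\), and the columns \(U_Re_k\) form an orthonormal basis of \(\Vcal_R\). Substituting Step 1 gives
\[
U_R^*TU_R=C(0)^{-1/2}C(1)C(0)^{-1/2}=T_{\rm eff}.
\]
This is the matrix of \(PT|_{\Vcal_R}\) in that orthonormal basis, i.e.\ the compression.

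\textbf{Step 3: Ritz values.} The eigenpairs \(T_{\rm eff}w=\lambda w\) correspond to \(u=U_Rw\in\Vcal_R\) with \(P(Tu-\lambda u)=0\), that is, \(\ip{v}{Tu-\lambda u}=0\) for all \(v\in\Vcal_R\). This is exactly the Rayleigh--Ritz (Galerkin) condition, so the eigenvalues are Ritz values. Writing \(w=C(0)^{1/2}c\) shows the equivalence with the generalized eigenvalue problem \(C(1)c=\lambda C(0)c\), i.e.\ \eqref{eq:GEVP-main} with \(n=1\), \(n_0=0\).

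\textbf{Step 4: convergence under enlargement.} I would not reprove this; I would invoke the standard Rayleigh--Ritz facts for the bounded self-adjoint operator \(T\). By Cauchy interlacing and the Courant--Fischer min-max characterization, the \(k\)-th largest Ritz value satisfies \(\lambda_k^{(R)}\le\lambda_k(T)\) whenever \(\lambda_k(T)\) lies above the essential spectrum. It is also nondecreasing along nested trial spaces. If \(P_R\to I\) strongly, or more weakly if \(\operatorname{dist}(\psi_k,\Vcal_R)\to0\) for the relevant eigenvectors, the \(\lambda_k^{(R)}\) converge to the isolated top eigenvalues of \(T\). Via \(E=-a^{-1}\log\lambda\), which is monotone, these are the isolated low-lying levels of \(H\).

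The only delicate point is making precise what "the usual approximation assumptions" means: nestedness or density of the trial spaces, and isolation of the eigenvalues from the essential spectrum. I would state these as hypotheses and cite the min-max argument rather than develop a quantitative gap-dependent rate.
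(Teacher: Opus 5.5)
Your proof is correct and follows the same route as the paper. Both arguments read off $C(0)=J_R^*J_R$ and $C(1)=J_R^*TJ_R$ from the definition of $C(n)$, check $U_R^*U_R=I$ with $\operatorname{Ran}U_R=\Vcal_R$, and cite standard Rayleigh--Ritz theory for the bounded self-adjoint $T$ together with $T=\e^{-aH}$. Your Steps 3--4 (the Galerkin characterization, the equivalence with the GEVP, and the min--max/interlacing argument under explicitly stated hypotheses) just spell out what the paper leaves to that citation.
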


\begin{proof}
The first two identities follow immediately from
\eqref{eq:correlation-matrix}.  Since \(C(0)>0\), we have
\[
        U_R^*U_R
        =
        C(0)^{-1/2}J_R^*J_RC(0)^{-1/2}
        =
        I,
\]
and the range of \(U_R\) is \(\Vcal_R\).  Hence \(U_R^*TU_R\) is precisely the
compression of \(T\) to the trial space.  The final statement is the standard
Rayleigh--Ritz convergence result for the bounded self-adjoint operator \(T\),
combined with the spectral relation \(T=\e^{-aH}\).
\end{proof}

The same argument applies at nonzero reference time \(n_0\) after replacing \(J_R\) by \(T^{n_0/2}J_R\).  
Thus the metric \(C(n_0)\) describes the linearly
independent Euclidean-filtered trial directions that remain resolved at that
time.

\subsection{Conditional Galerkin spaces}
\label{subsec:conditional-galerkin-interpretation}

The correlation-matrix construction is closest to Euclidean Monte Carlo.  A
conditional Galerkin basis gives the corresponding finite-dimensional
Hilbert-space picture and is useful in benchmark tests.

Choose an orthonormal active basis
\(
        \{|\alpha\rangle\}_{\alpha\in\mathcal A}
        \subset\Kcal .
\)
Writing
\[
        \Psi(x)
        =
        \sum_{\alpha\in\mathcal A}
        \psi_\alpha(x)|\alpha\rangle
\]
turns \(H\Psi=E\Psi\) into the exact coupled-channel system
\begin{equation}
        H_x\psi_\alpha(x)
        +
        \sum_{\beta\in\mathcal A}
        A_{\alpha\beta}(x)\psi_\beta(x)
        =
        E\psi_\alpha(x).
        \label{eq:coupled-channel-S3}
\end{equation}

A channel-adapted basis can be constructed from the diagonal Hamiltonians
\begin{equation}
        H_x^{(\alpha)}
        =
        H_x+A_{\alpha\alpha}(x).
        \label{eq:channel-H-S3}
\end{equation}
Let
\begin{equation}
        H_x^{(\alpha)}\chi_r^{(\alpha)}
        =
        \epsilon_r^{(\alpha)}\chi_r^{(\alpha)}
        \label{eq:conditional-basis-S3}
\end{equation}
in the discrete-spectrum case, or let
\(\{\chi_r^{(\alpha)}\}\) be any finite channel-adapted family obtained by a
deterministic or Euclidean projection procedure.  The trial space is
\begin{equation}
        \Vcal_R
        =
        \operatorname{span}
        \left\{
        \chi_r^{(\alpha)}|\alpha\rangle:
        \alpha\in\mathcal A_R,\ 
        0\leq r<R_\alpha
        \right\}.
        \label{eq:VR-conditional}
\end{equation}
The overlap and Hamiltonian matrices are
\begin{align}
        (S_R)_{\alpha r,\beta s}
        &=
        \left\langle
        \chi_r^{(\alpha)}|\alpha\rangle,
        \chi_s^{(\beta)}|\beta\rangle
        \right\rangle_{\Hcal},
        \label{eq:SR-general-S3}
        \\
        (H_R)_{\alpha r,\beta s}
        &=
        \left\langle
        \chi_r^{(\alpha)}
        \middle|
        H_x\delta_{\alpha\beta}+A_{\alpha\beta}(x)
        \middle|
        \chi_s^{(\beta)}
        \right\rangle_{L^2(X)} .
        \label{eq:HR-general-S3}
\end{align}
The reduced problem is
\begin{equation}
        H_Rc=E_RS_Rc.
        \label{eq:generalized-HR-S3}
\end{equation}
It is the ordinary Rayleigh--Ritz problem for \(H\) on \(\Vcal_R\).  In
particular, when the trial vectors lie in the form domain of \(H\), the
ground-state Ritz value is an upper approximation to the exact ground-state
energy and converges under the standard trial-space enlargement assumptions.

If the active labels are orthogonal and the conditional functions are
orthonormal within each channel, then \(S_R=I\).  Cross-channel overlaps may
nevertheless enter \(H_R\) through the off-diagonal couplings
\(A_{\alpha\beta}(x)\).  This mechanism will be used in the two-channel
oscillator test of \S\ref{subsec:numerics-e1}: the Gram matrix is trivial,
while the spin-flip block contains overlaps between oppositely displaced
oscillator states.

The construction uses a fixed active basis.  If the active basis itself is
allowed to depend on \(x\), the corresponding derivative or connection terms
must be included in the projected Hamiltonian.

\subsection{Finite density after projection}
\label{subsec:finite-density-after-projection}

Let \(N\) be a conserved number operator, with \([H,N]=0\).  In the full
Hamiltonian description, we have
\[
        H(\mu)=H-\mu N.
\]
This subsection treats the simple case in which the full chemical-potential
term is deferred until after the finite trial space has been selected or
resolved from Euclidean data.  If part of the chemical-potential dependence is
included in the reference problem, the same discussion applies to the
remaining projected contribution.  At finite rank, the interpretation depends
on whether one uses a direct Hamiltonian compression or a transfer-matrix
reconstruction.

For a direct Galerkin projection, let
\(U_R:\mathbb C^R\to\Hcal\) be an isometry with range \(\Vcal_R\), and define
\[
        H_R=U_R^*HU_R,
        \qquad
        N_R=U_R^*NU_R.
\]
We then infer
\begin{equation}
        H_R-\mu N_R
        =
        U_R^*(H-\mu N)U_R.
        \label{eq:direct-galerkin-mu}
\end{equation}
Thus the finite-density Hamiltonian is exactly the Galerkin compression of
\(H-\mu N\) on the same trial space.

Number conservation inside the projected model requires more than the identity
\eqref{eq:direct-galerkin-mu}.  If \(\Vcal_R\) is invariant under \(N\), its
number sectors are represented exactly and
\[
        [H_R,N_R]=0.
\]
If it is not invariant, \(N_R\) remains self-adjoint, but projection can give
\begin{equation}
        [H_R,N_R]\neq0
        \qquad\text{even though}\qquad
        [H,N]=0.
        \label{eq:number-leakage-S3}
\end{equation}
The commutator therefore provides a diagnostic of number-sector leakage.  A
small commutator alone is not sufficient: the trial space must also contain the
number sectors whose tilted levels \(E_{N,j}-\mu N\) contribute to the target
observable.

The correlation-matrix route contains an additional approximation.  Exact
correlation data determine a compressed transfer operator
\[
        T_R=U_R^*\e^{-aH}U_R,
\]
and, when stable,
\[
        H_{\rm eff}(0)=-a^{-1}\log T_R .
\]
Unless \(\Vcal_R\) is invariant under \(\e^{-aH}\),
\begin{equation}
        -a^{-1}
        \log\!\left(U_R^*\e^{-aH}U_R\right)
        \neq
        U_R^*HU_R
        \label{eq:log-compression-S3}
\end{equation}
in general.  Thus the finite-density prescription
\begin{equation}
H_{\rm eff}(\mu) = H_{\rm eff}(0)-\mu N_{\rm eff}
\label{eq:Heff-mu-S3}
\end{equation}
defines an effective projected model, not necessarily the direct Galerkin
compression of the full finite-density Hamiltonian.

At reference time \(n_0=0\), the number matrix in the orthonormalized trial
basis is
\begin{equation}
        N_{\rm eff}
        =
        C(0)^{-1/2}N_C C(0)^{-1/2},
        \qquad
        (N_C)_{IJ}
        =
        \ip{\Phi_I}{N\Phi_J}_{\Hcal}.
        \label{eq:Neff-orthonormal}
\end{equation}
The analogous construction may be made in a Euclidean-filtered trial space.
Thermal observables are then computed from
\begin{equation}
        Z_R(\beta,\mu)
        =
        \Tr_{\Vcal_R}
        \e^{-\beta(H_{\rm eff}(0)-\mu N_{\rm eff})} .
        \label{eq:ZR-mu-S3}
\end{equation}

This is the operator-split interpretation of finite density in its simplest
form.  The residual term \(-\mu N\) is inexpensive once the relevant canonical
sectors have been represented.  The nontrivial task is to harvest enough
sector-resolved information from the chosen reference calculation that the
finite-density tilt does not select states absent from the projected space.

\subsection{Conditioning and stability diagnostics}
\label{subsec:conditioning-stability}

The correlation-matrix formulation exposes several distinct error sources.
The metric \(C(n_0)\) must be positive definite on the resolved trial space.
In exact arithmetic, a zero eigenvalue signals linear dependence among the
Euclidean-filtered trial states.  With stochastic data, small positive
eigenvalues can be obscured by matrix-element noise.

In practice, the measured matrices are symmetrized and the spectrum of
\(C(n_0)\) is inspected.  Directions below a specified eigenvalue threshold
are removed or regularized, and the resulting retained rank is reported.  The
projected spectrum should then be checked under changes of \(n_0\), \(n\), the
trial basis, the threshold, and the stochastic resample.  More elaborate
statistical treatments of noisy norm and Hamiltonian matrices are also
available; see, e.g., \cite{HicksLee2023Trimmed}.

It is useful to distinguish four approximation layers:

\begin{enumerate}
        \item finite-rank error from the choice of trial space;
        \item stochastic error in the entries of \(C(n)\);
        \item regularization error from removing poorly resolved metric
        directions;
        \item transfer-logarithm error or instability when retained transfer
        eigenvalues approach zero.
\end{enumerate}

The GEVP avoids forming a noisy matrix logarithm, but it does not remove the
metric-conditioning problem.  A stable calculation should report the resolved
rank, the smallest retained metric eigenvalue or condition number, and the
dependence of the target quantities on trial-space and Euclidean-time
enlargement.

In a direct Galerkin calculation one may additionally monitor the residual
\begin{equation}
        r_R
        =
        H\Psi_R-E_R\Psi_R.
        \label{eq:residual-S3}
\end{equation}
In a correlation-matrix calculation, analogous closure information can be
obtained by adding trial operators or measuring couplings to candidate external
directions.  At finite density, the corresponding diagnostics are
number-sector coverage and, where number conservation is expected,
\(\|[H_{\rm eff},N_{\rm eff}]\|\).

The two implementation routes are complementary: conditional Galerkin forms
\(S_R\), \(H_R\), and \(N_R\) directly, whereas the correlation-matrix route
estimates \(C_{IJ}(n)\), resolves the transfer dynamics, and applies the
deferred finite-density term.  Both require closure at moderate rank and usable
Euclidean visibility of the retained directions.

\section{Numerical experiments: compression, matrix harvesting, and
finite-density continuation}
\label{sec:numerical-results}

This section tests three components of CEH in finite models with exact or
independently controlled references: conditional basis compression, stochastic
harvesting of projected correlation matrices, and finite-density continuation
in a signful Hubbard benchmark.  The diagnostics are spectral and density
errors, residuals, matrix uncertainty, metric conditioning, retained rank,
number-sector coverage, and number-sector leakage.  These experiments test
finite-model control, not asymptotic scaling or production-code performance.

\subsection{Conditional compression in a two-channel oscillator}
\label{subsec:numerics-e1}

Consider the two-channel oscillator
\begin{equation}
        H
        =
        \frac{p^2}{2}
        +
        \frac{\Omega^2x^2}{2}
        +
        \epsilon\sigma_z
        +
        t\sigma_x
        +
        gx\sigma_z .
        \label{eq:spin-boson-test-numerics}
\end{equation}
The oscillator is the reference sector, and the two-dimensional spin space is
the active sector.
In the \(\sigma_z\) basis, the diagonal channel
Hamiltonians are the oppositely displaced oscillators
\begin{equation}
\begin{aligned}
        H_b^{(+)}
        &=
        \frac{p^2}{2}
        +
        \frac{\Omega^2x^2}{2}
        +
        gx+\epsilon,
        \\
        H_b^{(-)}
        &=
        \frac{p^2}{2}
        +
        \frac{\Omega^2x^2}{2}
        -
        gx-\epsilon .
\end{aligned}
\label{eq:e1-shifted-channels}
\end{equation}
Let
\[
        H_b^{(\pm)}\chi_r^{(\pm)}
        =
        \varepsilon_r^{(\pm)}\chi_r^{(\pm)} .
\]
The conditional Galerkin space with \(R\) oscillator states in each channel is
\begin{equation}
        \Vcal_R^{\rm cond}
        =
        \operatorname{span}
        \left\{
        \chi_r^{(+)}|+\rangle,\,
        \chi_r^{(-)}|-\rangle:
        0\leq r<R
        \right\}.
        \label{eq:spin-boson-conditional-basis-numerics}
\end{equation}
Because the spin labels are orthogonal and the oscillator states are
orthonormal within each channel,
\begin{equation}
        S_R^{\rm cond}=I,
        \qquad
        \kappa(S_R^{\rm cond})=1.
        \label{eq:e1-conditional-metric}
\end{equation}
The displaced-oscillator overlaps enter through the off-diagonal spin-flip
matrix elements
\begin{equation}
        \langle
        \chi_r^{(+)}|+\rangle,
        H\chi_s^{(-)}|-\rangle
        \rangle
        =
        t\langle\chi_r^{(+)}|\chi_s^{(-)}\rangle .
        \label{eq:e1-franck-condon}
\end{equation}

For comparison, we use the equal-dimensional undressed product space
\begin{equation}
        \Vcal_R^{\rm prod}
        =
        \operatorname{span}
        \left\{
        \varphi_r|+\rangle,\,
        \varphi_r|-\rangle:
        0\leq r<R
        \right\},
        \label{eq:e1-product-space}
\end{equation}
where \(\varphi_r\) are eigenfunctions of the undisplaced oscillator
\[
        H_b=\frac{p^2}{2}+\frac{\Omega^2x^2}{2}.
\]
Both spaces have total dimension \(2R\).  The conditional basis incorporates
the channel displacement before the channels are coupled, whereas the product
basis must reproduce the same displaced states through linear combinations of
undisplaced oscillator levels.

For the compression test we use
\[
        \Omega=1,
        \qquad
        \epsilon=0.3,
        \qquad
        t=0.2,
        \qquad
        g=2,
\]
with an exact-diagonalization reference converged under further enlargement of
the oscillator cutoff.  Figure~\ref{fig:e1-conditional-vs-full} compares the
two equal-dimensional trial spaces.  At \(R=2,4,8\), the conditional
ground-state errors are respectively
\(5.23\times10^{-3}\), \(4.74\times10^{-3}\), and
\(2.11\times10^{-3}\), compared with
\(9.88\times10^{-1}\), \(2.48\times10^{-1}\), and
\(4.75\times10^{-3}\) for the undressed product basis.  The advantage is
therefore largest at very small rank and decreases once the product basis
contains enough oscillator levels to resolve the displaced states.

\begin{figure}[ht]
        \centering
        \includegraphics[width=0.7\textwidth]
        {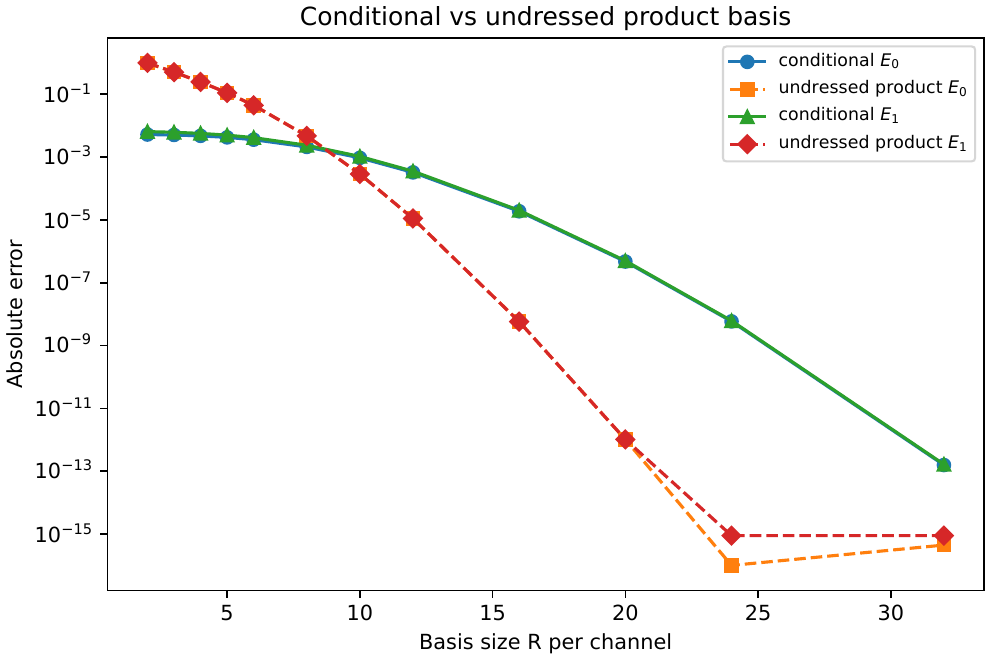}
        \caption{Low-energy errors versus \(R\) per channel.}
        \label{fig:e1-conditional-vs-full}
\end{figure}

For the conditional ground state, the residual
\(\|H\Psi_R-E_R\Psi_R\|\) decreases from approximately
\(2.00\times10^{-1}\) at \(R=2\) to \(1.81\times10^{-2}\) at \(R=16\).
Thus the experiment demonstrates a low-rank basis-compression effect within
ordinary Rayleigh--Ritz approximation.  The gain is largest when the active
labels select substantially different reference-sector geometries.

\subsection{Positive-measure harvesting of projected matrices}
\label{subsec:numerics-e2-e2b}

We next test the correlation-matrix construction of
\S\ref{subsec:correlation-primary} using the same Hamiltonian form, now
with
\begin{equation}
        \Omega=1,
        \qquad
        \epsilon=0.25,
        \qquad
        t=0.35,
        \qquad
        g=0.5,
        \qquad
        a=0.1.
        \label{eq:e2-parameters}
\end{equation}
The GEVP is solved at \(n_0=2\) and \(n=8\), using
Eqs.~\eqref{eq:correlation-matrix} and \eqref{eq:GEVP-main}.

\subsubsection{Deterministic projected reference}

We first construct \(C(n)\) deterministically from a high-cutoff oscillator
representation whose results are stable under further cutoff enlargement.
This separates projected-space error from Monte Carlo error.  At \(n_0=2\)
and \(n=8\), the errors in the two lowest GEVP energies are reported in
Table~\ref{tab:e2-deterministic-errors}.  Here
\(\dim\Vcal_R\) is the total dimension of the correlation-matrix trial space.

\begin{table}[ht]
\centering
\begin{tabular}{c|cc}
\hline
\(\dim\Vcal_R\) &
\(\left|E_0^{\rm GEVP}-E_0^{\rm exact}\right|\) &
\(\left|E_1^{\rm GEVP}-E_1^{\rm exact}\right|\) \\
\hline
\(8\)  & \(4.86\times10^{-5}\) & \(8.05\times10^{-5}\) \\
\(16\) & \(5.87\times10^{-8}\) & \(8.60\times10^{-8}\) \\
\hline
\end{tabular}
\caption{Deterministic GEVP energy errors at \(n_0=2\) and \(n=8\).}
\label{tab:e2-deterministic-errors}
\end{table}

The errors decrease by nearly three orders of magnitude when the trial-space
dimension is increased from \(8\) to \(16\).  The same calculation also
illustrates the Euclidean-time conditioning trade: increasing \(n_0\)
suppresses higher-energy directions but makes the remaining metric more
ill-conditioned.  At trial-space dimension \(16\),
\(\operatorname{cond}C(n_0)\) increases from approximately \(4.76\) at
\(n_0=1\) to \(514.25\) at \(n_0=4\).  
The stochastic calculation below therefore monitors the metric spectrum and
condition number.

\subsubsection{Stochastic CEH correlation matrices}

To produce the same type of matrix data from a positive reference measure, we
take the harmonic oscillator as the reference sector.  Write its normalized
eigenfunctions as
\[
        \varphi_r(x)=\varphi_0(x)p_r(x),
\]
where \(p_r\) is the normalized Hermite-polynomial factor relative to the
ground-state probability density \(|\varphi_0(x)|^2\,dx\).

A stationary Ornstein--Uhlenbeck process with this invariant density generates
the oscillator paths.
Equivalently, with \(X_0\sim|\varphi_0|^2dx\) and independent
\(\xi_j\sim\mathcal N(0,1)\),
\[
        X_{j+1}
        =
        \e^{-\Omega a}X_j
        +
        \sqrt{\frac{1-\e^{-2\Omega a}}{2\Omega}}\,\xi_j.
\]
For a path
\(X=(X_0,\ldots,X_n)\),
the active-sector Hamiltonian is
\begin{equation}
        A(X_j)
        =
        \epsilon\sigma_z+t\sigma_x+gX_j\sigma_z,
        \label{eq:e2-active-matrix}
\end{equation}
and its conditioned Euclidean propagation is approximated by
\begin{equation}
        U_{\rm active}[X]
        =
        \e^{-aA(X_n)/2}
        \e^{-aA(X_{n-1})}\cdots\e^{-aA(X_1)}
        \e^{-aA(X_0)/2} .
        \label{eq:e2b-symmetric-active-product}
\end{equation}
The correlation-matrix
estimator is
\begin{equation}
        C_{r\alpha,s\beta}(n)
        =
        \e^{-\Omega na/2}
        \mathbb E_{\rm ref}
        \left[
        p_r(X_n)p_s(X_0)
        \bigl(U_{\rm active}[X]\bigr)_{\alpha\beta}
        \right].
        \label{eq:e2b-stochastic-estimator}
\end{equation}
Every sampled reference path therefore contributes a finite active-sector
matrix.  The active trace is not inserted into the probability measure.

The stochastic calculation uses \(R_b=2\) oscillator states per active
channel, giving a four-dimensional trial space.
At \(30000\) paths per seed and ten
independent seeds, the extracted energies are shown in
Table~\ref{tab:e2b-summary}.  The stochastic values agree with the
deterministic projected reference within approximately one standard error.
The stochastic--deterministic projected difference contains sampling and
symmetric-splitting error, while the further difference from exact
diagonalization also contains finite-trial-space error.

\begin{table}[ht]
\centering
\begin{tabular}{c|ccc}
\hline
level &
stochastic GEVP &
deterministic projected &
exact diagonalization \\
\hline
\(E_0\) &
\(-0.017842\pm0.000885\) &
\(-0.016893\) &
\(-0.020021\) \\
\(E_1\) &
\(0.662122\pm0.002463\) &
\(0.664953\) &
\(0.643571\) \\
\hline
\end{tabular}
\caption{Stochastic, projected, and exact GEVP energies in the \(R_b=2\) trial space.}
\label{tab:e2b-summary}
\end{table}

The stochastic mean condition number of \(C(n_0)\) is approximately \(1.49\).
No metric eigenvalue cut was required, no nonpositive retained transfer
eigenvalue was encountered, and all GEVP solves succeeded.

\begin{figure}[ht]
        \centering
        \includegraphics[width=0.75\textwidth]
        {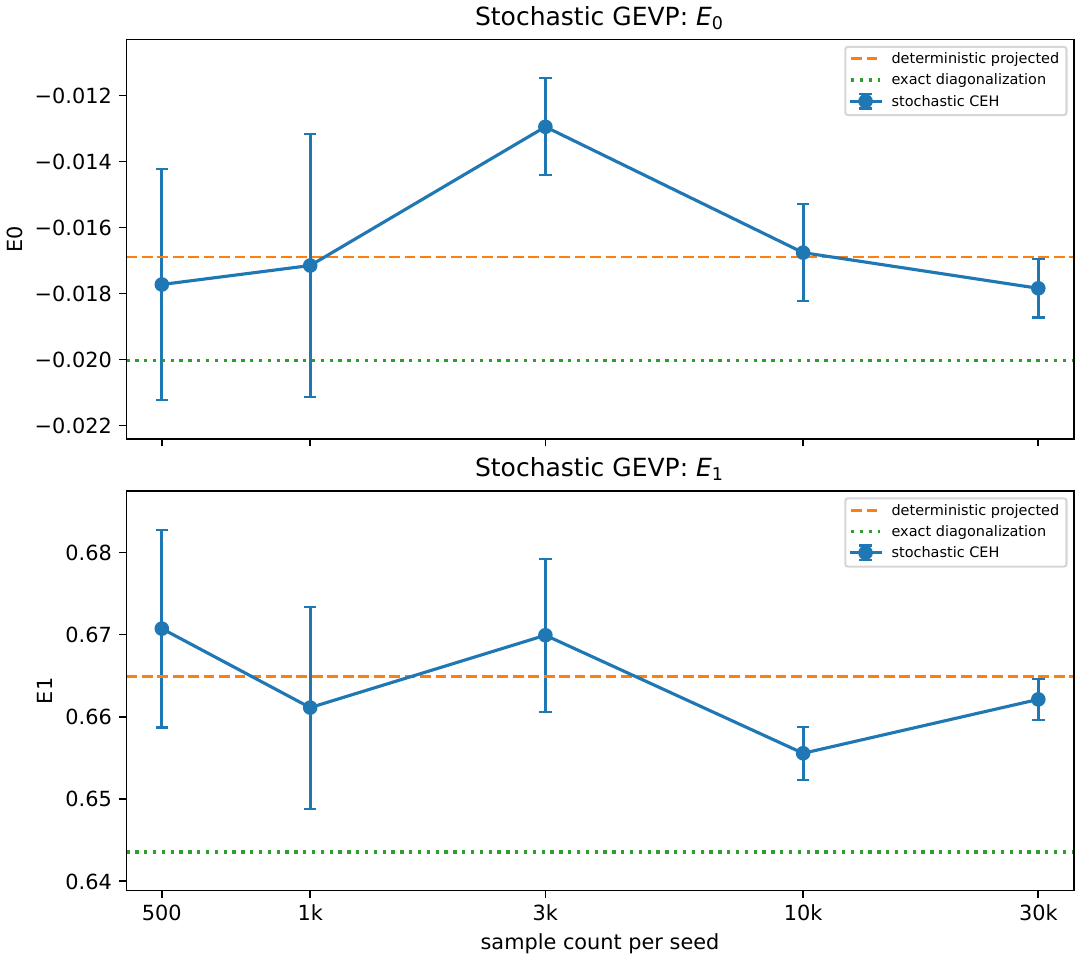}
	\caption{Stochastic GEVP energies versus sample count.}
	\label{fig:e2b-stochastic-gevp}
\end{figure}

This closes the minimal stochastic CEH loop in the oscillator model:
\[
        \text{paths from a positive reference measure}
        \longrightarrow
        \widehat C_{IJ}(n)
        \longrightarrow
        \text{projected spectral data}.
\]
It does not imply that the same estimator remains well conditioned at large
active rank or spatial volume.

\subsubsection{Comparison with a PDMS-style estimator}

We also compare the CEH estimator with a PDMS-style construction on the same
two-channel Hamiltonian.  The PDMS-style branch estimates projected thermal
density and energy matrices,
\begin{equation}
        Z_{ac}(\beta)
        =
        \langle\psi_a,\e^{-\beta H}\psi_c\rangle,
        \qquad
        E_{ac}(\beta)
        =
        \langle\psi_a,H\e^{-\beta H}\psi_c\rangle,
        \label{eq:e2c-pdms-matrices}
\end{equation}
and extracts projected energies from the corresponding generalized matrix
problem.
The CEH branch instead estimates reference-sector correlation matrices
with active-sector matrix observables and applies the GEVP
\eqref{eq:GEVP-main}.
The two branches therefore share projected linear algebra
but do not sample the same stochastic object.

Table~\ref{tab:e2c-pdms-vs-ceh} reports the largest common sampling budget,
\(8000\) samples per seed and five independent seeds.  In this test the
PDMS-style estimator gives smaller projected energy errors, whereas the CEH
metric is better conditioned and its selected diagonal entry has a smaller
seed-to-seed standard error.
The reported condition numbers are those of \(Z\) and \(C(n_0)\) for the
PDMS-style and CEH branches, respectively; their uncertainties are
seed-to-seed standard errors.

\begin{table}[ht]
\centering
\begin{tabular}{l|cc}
\hline
quantity &
PDMS-style &
CEH correlation estimator \\
\hline
\(E_0\) error
& \(1.47\times10^{-4}\)
& \(4.77\times10^{-3}\) \\
\(E_1\) error
& \(1.75\times10^{-3}\)
& \(1.74\times10^{-2}\) \\
metric condition number
& \(39.1\pm3.35\)
& \(1.483\pm0.011\) \\
selected diagonal-entry SE
& \(6.61\times10^{-4}\)
& \(2.49\times10^{-4}\) \\
regularization or failed solve
& none
& none \\
\hline
\end{tabular}
\caption{PDMS-style and CEH estimates at \(8000\) samples per seed.}
\label{tab:e2c-pdms-vs-ceh}
\end{table}

The uncertainty row is the seed-to-seed standard error of a preselected
diagonal entry: \(Z_{00}\) for the PDMS-style projected density matrix and
\(C_{00}(n_0)\), with \(n_0=2\), for the CEH correlation metric.  The flattened
index \(0\) corresponds to \((r,\alpha)=(0,0)\) under the convention
\(I=2r+\alpha\).  Since the two methods form different projected stochastic
objects, this is a variance diagnostic rather than an element-by-element
comparison of identical observables.  More generally, projected-energy
accuracy, metric conditioning, and matrix-entry variance need not favor the
same estimator.

\subsection{Finite-density continuation in a Hubbard ring}
\label{subsec:numerics-hubbard}

The final experiment uses a periodic four-site spinful Hubbard ring and is
organized into two parallel diagnostics.
The determinant branch treats the full finite-density model in the standard
scalar-weight formulation and provides a finite-budget sign-stress diagnostic.
The projected branch tests the
operator-split \(H(\mu)=H(0)-\mu N\) of
\S\ref{subsec:finite-density-after-projection}.
A trial space is selected using only chemical potentials below the target,
\(H(0)\) and \(N\) are projected onto that space, and the target chemical
potential is then applied through \(H_R-\mu N_R\).  The test therefore isolates
whether the non-target space contains the number sectors required for
finite-density observables.

We use the shifted repulsive Hubbard Hamiltonian
\begin{equation}
        H(\mu)
        =
        -t_{\rm hop}
        \sum_{\langle i,j\rangle,\sigma}
        \left(
        c_{i\sigma}^\dagger c_{j\sigma}
        +
        c_{j\sigma}^\dagger c_{i\sigma}
        \right)
        +
        U\sum_i
        \left(n_{i\uparrow}-\frac12\right)
        \left(n_{i\downarrow}-\frac12\right)
        -
        \mu N,
        \label{eq:hubbard-shifted}
\end{equation}
with
\[
        N=\sum_{i,\sigma}n_{i\sigma}.
\]
Unless otherwise stated, the parameters are
\begin{equation}
        L=4,
        \qquad
        t_{\rm hop}=1,
        \qquad
        U=8,
        \qquad
        \beta=8,
        \qquad
        \mu=2,
        \qquad
        \Delta\tau=0.25.
        \label{eq:hubbard-parameters}
\end{equation}
The full Fock-space dimension is \(4^L=256\).

\subsubsection{Determinant sign benchmark}

We use the standard discrete Hubbard--Stratonovich formulation
\cite{AssaadEvertz2008}.  The determinant branch samples the absolute weight
\begin{equation}
        \left|
        \det M_\uparrow[s]\det M_\downarrow[s]
        \right|,
        \label{eq:hubbard-absolute-weight}
\end{equation}
where \(s\) denotes the auxiliary-field history.  The determinant sign
\begin{equation}
        \operatorname{sgn}[s]
        =
        \operatorname{sgn}
        \left(
        \det M_\uparrow[s]\det M_\downarrow[s]
        \right)
        \label{eq:hubbard-sign}
\end{equation}
is restored in observables.  The particle--hole-symmetric \(\mu=0\)
calculation has unit sign and serves as a control.

The target calculation uses ten independent chains.  For each chain, the
first \(25\) sweeps are discarded, followed by \(90\) post-warm-up sweeps.
Here \(N_\tau=\beta/\Delta\tau=32\), and one sweep attempts
\(L N_\tau=128\) local Hubbard--Stratonovich-field flips.  The sign is recorded
every third sweep, giving \(30\) retained measurements per chain and \(300\)
in total.

For an independent chain \(r\), let
\[
        \bar s_r
        =
        \frac{1}{N_r}\sum_{k=1}^{N_r}s_{r,k}.
\]
We report the pooled signed mean, the mean value of \(|\bar s_r|\), and
\begin{equation}
        q_{\rm sign}
        =
        \frac{1}{N_{\rm seed}}
        \sum_{r=1}^{N_{\rm seed}}
        |\bar s_r|^2,
        \label{eq:e5-sign-diagnostic}
\end{equation}
which we use as a descriptive mean-squared chain-sign diagnostic.  Because all
chains contain the same number of retained measurements, the average of the
chain means is also the pooled mean over all retained signs.

At the target point \eqref{eq:hubbard-parameters}, we have
\begin{equation}
        \overline{\operatorname{sgn}}
        =
        0.020\pm0.099,
        \qquad
        \frac{1}{N_{\rm seed}}\sum_r|\bar s_r|
        =
        0.220\pm0.067,
        \qquad
        q_{\rm sign}=0.0893\pm0.0511.
        \label{eq:hubbard-sign-results}
\end{equation}
The uncertainties are standard errors across the ten independent chain
statistics.  The pooled signed mean is consistent with zero, and the
chain-level statistics show substantial sign cancellation.  Since each chain
contains only \(30\) retained measurements and no autocorrelation analysis is
used here, \(q_{\rm sign}\) should not be interpreted as an unbiased
asymptotic effective-sample-size fraction.  This calculation is a finite-budget
determinant-sign stress diagnostic.

\subsubsection{Non-target active spaces}

Write
\begin{equation}
        H_0=H(0),
        \qquad
        H(\mu)=H_0-\mu N.
        \label{eq:hubbard-H0}
\end{equation}
Since \([H_0,N]=0\), the eigenvectors may be labeled simultaneously by energy
and particle number:
\begin{equation}
        H_0|E_j,N_j\rangle
        =
        E_j|E_j,N_j\rangle,
        \qquad
        N|E_j,N_j\rangle
        =
        N_j|E_j,N_j\rangle.
        \label{eq:hubbard-common-eigenbasis}
\end{equation}
Their finite-density energies are
\begin{equation}
        E_j(\mu)=E_j-\mu N_j.
        \label{eq:hubbard-tilted-levels}
\end{equation}
The issue in this benchmark is therefore number-sector coverage: a low-energy
basis selected at \(\mu=0\) need not contain the sectors whose tilted levels
become thermally relevant at the target \(\mu=2\).

The simplest non-target space is
\begin{equation}
        \Vcal_R^{(0)}
        =
        \operatorname{span}
        \left\{
        |E_1,N_1\rangle,\ldots,|E_R,N_R\rangle
        \right\},
        \label{eq:hubbard-mu0-space}
\end{equation}
where the states are ordered by their energies at \(\mu=0\).  On this space we
form
\begin{equation}
        H_R=U_R^\dagger H_0U_R,
        \qquad
        N_R=U_R^\dagger NU_R,
        \qquad
        H_R(\mu)=H_R-\mu N_R.
        \label{eq:hubbard-projected-HN}
\end{equation}

We also construct a non-target multi-\(\mu\) space using
\begin{equation}
        \Mcal_{\rm train}=\{0,0.5,1\},
        \label{eq:hubbard-training-mus}
\end{equation}
which excludes the target value.  For each
\(\mu'\in\Mcal_{\rm train}\), let \(\psi_j^{(\mu')}\) denote the eigenvectors
of \(H_0-\mu' N\), ordered by nondecreasing eigenvalue.  The candidates are
interleaved by eigenlevel across the three training values and truncated after
a total of \(R\) vectors.  Thus \(R\) is a requested candidate budget, not a
number of states selected separately at each \(\mu'\).

Because candidates selected at different training values may coincide or be
linearly dependent, we reduce their span by a dense SVD of the
\(D\times R\) candidate matrix \(B_R\), with \(D=256\).  We retain the
singular directions satisfying
\[
        \sigma_a
        >
        10^{-10}\max\{D,R\}\,\sigma_{\max}
\]
and use the corresponding left singular vectors as an orthonormal active
basis.  Denoting its dimension by \(r_R\leq R\), the requested budgets
\(R=2,4,8,16,32,64\) give
\[
        r_R=1,2,4,6,11,26.
\]
The reduction from \(64\) candidates to \(26\) independent directions reflects
substantial redundancy across the training values.

Because \([H_0,N]=0\), varying \(\mu'\) only reorders the common eigenspaces
through the tilted energies \(E_j-\mu'N_j\).  Separate diagonalizations may
return different bases within degenerate subspaces, so linear dependence is
tested at the level of the candidate span rather than by matching eigenvector
labels.  The construction is therefore a sector-reordering and rank-filtering
test, not continuation along a changing eigenvector manifold.

For either construction, let \(\Vcal_R\) denote the retained space associated
with requested candidate budget \(R\).  The projected thermal density per site
is
\begin{equation}
        \rho_R(\beta,\mu)
        =
        \frac{1}{L}
        \frac{
        \Tr_{\Vcal_R}
        \left[
        N_R\e^{-\beta(H_R-\mu N_R)}
        \right]
        }{
        \Tr_{\Vcal_R}
        \left[
        \e^{-\beta(H_R-\mu N_R)}
        \right]
        }.
        \label{eq:hubbard-projected-density}
\end{equation}
At the target point, exact diagonalization gives
\(\rho_{\rm exact}=1.0003059\).  The normalized thermal weights are
\(0.99877638\) in \(N=4\), \(0.001223579\) in \(N=5\), and
\(3.752\times10^{-8}\) in \(N=6\), with all other sectors negligible at the
displayed precision.  Thus a projected space that contains the target
\(N=4\) ground state but no neighboring sectors can reproduce the ground-state
energy while predicting \(\rho_R=1\), leaving a density error of approximately
\(3.06\times10^{-4}\).

Figure~\ref{fig:e5-hubbard-compact} shows the dependence on the requested
candidate budget \(R\).  The \(\mu=0\)-only space remains confined to \(N=4\)
through \(R=16\).  At \(R=32\), it first includes \(N=3,4,5\), and the density
error drops from approximately \(3.059\times10^{-4}\) to
\(9.195\times10^{-8}\).  The ground-state energy converges earlier because the
target ground state is represented before the weakly occupied neighboring
sectors are complete.

To monitor number-sector leakage, define
\begin{equation}
        \eta_N(R)
        =
        \frac{
        \|[H_R,N_R]\|_F
        }{
        \|H_R\|_F\|N_R\|_F
        }.
        \label{eq:hubbard-number-commutator}
\end{equation}
The retained spaces in this deterministic construction are numerically
invariant under \(N\), so \(\eta_N(R)\) vanishes up to roundoff and SVD
rank-filtering errors.  It is therefore a consistency check rather than a
nontrivial conservation result.  In a stochastic correlation-matrix
implementation, the same diagnostic would detect projection-induced
number-sector leakage.

\begin{figure}[ht]
        \centering
        \includegraphics[width=0.75\textwidth]
        {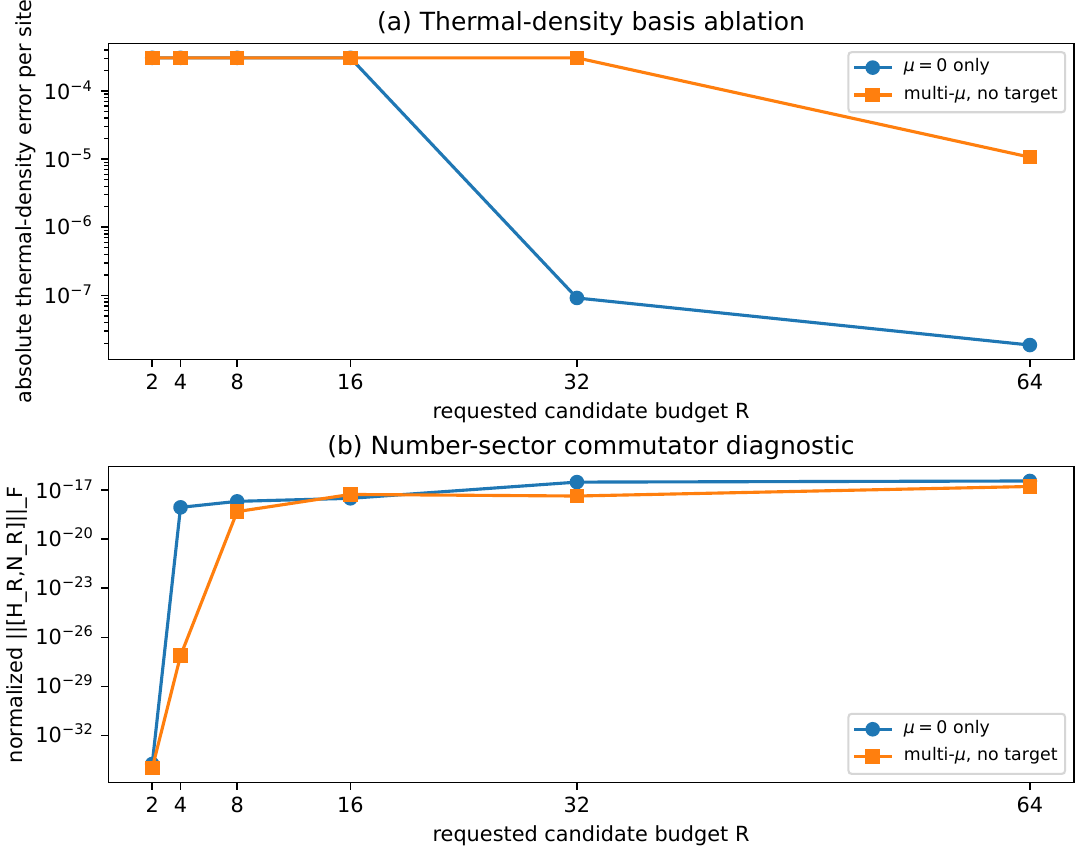}
        \caption{Hubbard active-space diagnostics.}
        \label{fig:e5-hubbard-compact}
\end{figure}

Table~\ref{tab:e5c-hubbard-ablation} compares the two constructions at the
same requested candidate budget \(R=64\), not at the same retained dimension.
The \(\mu=0\)-only space retains all \(64\) directions, whereas the
multi-\(\mu\) space retains \(r_{64}=26\), comprising \(4\), \(16\), and \(6\)
directions in the \(N=3,4,5\) sectors, respectively.  Both spaces recover the
target ground-state energy, but the larger \(\mu=0\)-only space gives the more
accurate thermal density.  The multi-\(\mu\) construction is included as a
test of candidate selection and redundancy, not as an accuracy improvement at
fixed requested budget.

\begin{table}[ht]
\centering
\small
\begin{tabular}{l|ccc}
\hline
basis &
retained rank &
\(\left|\rho_R-\rho_{\rm exact}\right|\) &
\(\left|E_{0,R}-E_0\right|\) \\
\hline
\(\mu=0\) only
&
\(64\)
&
\(1.875\times10^{-8}\)
&
\(<10^{-14}\)
\\
multi-\(\mu\), no target
&
\(26\)
&
\(1.069\times10^{-5}\)
&
\(<10^{-14}\)
\\
\hline
\end{tabular}
\caption{Hubbard active spaces at requested budget \(R=64\).}
\label{tab:e5c-hubbard-ablation}
\end{table}

The two Hubbard branches have deliberately limited roles.  The determinant
branch provides a finite-budget sign-stress diagnostic, while the active-space
branch tests number-sector coverage and rank filtering in a controlled
deterministic setting.  The latter is not scalable, since its candidates are
obtained by exact diagonalization of the small reference Hamiltonians, and it
does not address the harder task of learning active directions from Euclidean
correlation data.


\section{Concluding remarks}
\label{sec:conclusion}

CEH places the reference/active split before the residual sector is collapsed
into a scalar Euclidean weight.  A tractable reference calculation supplies
projected correlation or transfer matrices, while the residual degrees of
freedom remain operator-valued and are reduced to a finite Hamiltonian model;
any deferred finite-density dependence is then applied through the reduced
operators.  At finite rank this is an approximation: direct Galerkin
compression commutes with adding \(-\mu N\) on a fixed trial space, whereas a
transfer-logarithm Hamiltonian is only an effective generator of the resolved
Euclidean dynamics.  Reliability therefore requires rank and Euclidean-time
enlargement, metric conditioning, stochastic matrix uncertainty, and
number-sector diagnostics.

The oscillator tests separated basis and stochastic errors.  A channel-adapted
basis gave markedly better low-rank compression than an undressed product
basis, although the latter caught up at larger cutoff.  The positive-reference
calculation reproduced the deterministic projected GEVP energies within
statistical errors.  In the controlled PDMS-style comparison, PDMS gave smaller
energy errors, while the CEH metric was better conditioned and its selected
diagonal entry had a smaller seed-to-seed standard error.

The Hubbard ring combined a finite-budget determinant-sign stress test with a
separate sector-coverage calculation.  The ground-state energy converged once
the dominant canonical sector was present, but the thermal density required
weakly occupied neighboring sectors.  Because exact diagonalization supplied
the trial spaces, this benchmark is neither scalable nor an end-to-end
stochastic CEH solution of the Hubbard sign problem.

The benchmarks support the proposed trade from scalar sign reweighting to a
projected matrix problem with visible errors, provided the active dynamics are
low-rank approximable and the required matrix data are harvestable with usable
overlap, variance, and conditioning.  The next decisive test is a spatially
extended model in which reference-sector Euclidean data construct an active
basis that is enlarged while rank, metric resolution, stochastic signal, and
number-sector coverage are tracked with system size and density.

\section*{Data availability}

The numerical data underlying the tables and figures, together with the
essential analysis scripts, are available from the author upon reasonable
request.

\section*{Acknowledgements}

I thank Professors Ming Mei and Zejia Wang for hosting my visit to Jiangxi
Normal University, and the physics faculty members at the National University
of Mongolia who participated in the 2024--2025 Theoretical and Mathematical
Physics seminar.  This work was partially supported by an NSERC Discovery
Grant.

\end{document}